\theoremstyle{plain}
\newtheorem{thm}{Theorem}
\newtheorem{cor}[subsection]{Corollary}
\theoremstyle{definition}
\newtheorem{defi}[subsection]{Definition}
\renewcommand{\AA}{{\mathbb A}}
\newcommand{\CC}{{\mathbb C}}
\newcommand{\PP}{{\mathbb P}}
\newcommand{\ZZ}{{\mathbb Z}}
\newcommand{\ol}{\overline}
\title{Duality of 2D gravity as a local Fourier duality}
\author{Martin T. Luu \footnote{Department of Mathematics, Stanford University, Stanford, CA 94305, USA, email: mluu@math.stanford.edu}}
\date{}
\begin{document}

\maketitle

\begin{abstract}
The p -- q duality is a relation between the $(p,q)$ model and the $(q,p)$ model of two-dimensional quantum gravity. Geometrically this duality corresponds to a relation between the two relevant points of the Sato Grassmannian. Kharchev and Marshakov have expressed such a relation in terms of matrix integrals. Some explicit formulas for small $p$ and $q$ have been given in the work of Fukuma-Kawai-Nakayama. Already in the duality between the $(2,3)$ model and the $(3,2)$ model the formulas are long. In this work a new approach to p -- q duality is given: It can be realized in a precise sense as a local Fourier duality of D-modules. This result is obtained as a special case of a local Fourier duality between irregular connections associated to Kac-Schwarz operators. Therefore, since these operators correspond to Virasoro constraints, this allows to view the p -- q duality as a consequence of the duality of the relevant Virasoro constraints.
\end{abstract}

\section{Introduction}
The local Fourier transform of $\ell$-adic sheaves was developed by Laumon in \cite{LAU} and applied in an arithmetic context. His work was inspired by Witten's proof of the Morse inequalities. The complex version of the local Fourier transform was defined by Bloch and Esnault in \cite{BE} and by Lopez in \cite{LOP}. We show here that this complex version is of importance in physics: 

Consider a pair of positive co-prime integers $(p,q)$. The partition function $\textrm{Z}(t_{1},t_{2},\cdots)$ of two-dimensional gravity coupled to a $(p,q)$ conformal field is given by 
$$\textrm{Z}(t_{1},t_{2},\cdots)=\tau^{2}(t_{1},t_{2},\cdots)$$
where $\tau$ is a certain $\tau$-function of the KP hierarchy. 
The p -- q duality of 2D quantum gravity, which is the analogue of T - duality for non-critical string theory, is a relation between the $(p,q)$ theory and the $(q,p)$ theory. Geometrically, this can be interpreted as relating the two points of the Sato Grassmannian that give rise to the relevant $\tau$-functions. We show that this duality can be described as a local Fourier transform.

In our approach this duality will be a special case of the following. Consider the $(W,Q)$ model of 2D gravity where $W$ and $Q$ are general polynomials, see \cite{KM} for details. It is described by a $W$-reduced $\tau$-function satisfying a suitable Virasoro constraint. This constraint can be formulated in terms of the action of the Kac-Schwarz operator $A^{W,Q}$ on the Sato Grassmannian. There is a natural way to associate an irregular connection on the formal punctured disc to these operators and we show that one can relate the connection of the $(W,Q)$ model with the connection of the $(Q,W)$ model via the local Fourier transform. In other words we show that there is a Fourier duality between Virasoro constraints. The p -- q duality is then obtained as a special case in the following manner: 

By the work of Schwarz \cite{SCH}, \cite{SCH2} one can attach D-modules to the $(p,q)$ models generalizing the results for the $(p,1)$ models due to Dijkgraaf-Hollands-Sulkowski \cite{DHS}. These D-modules are constructed from the quantization results of \cite{SCH}. The D-module of the $(p,q)$ model is related to the connection associated to the Kac-Schwarz operator and hence the duality of Virasoro constraints yields the p -- q duality. 

The reason that the local Fourier transform is relevant in describing the p -- q duality can be succinctly summarized. For two polynomials $W$ and $Q$ consider the Kac-Schwarz operator
$$A^{W,Q} := \frac{1}{W'(z)} \frac{\textrm{d}}{\textrm{d}z} - \frac{W''(z)}{2W'(z)^{2}}+Q(z).$$
It will turn out that to obtain a W -- Q duality one should relate $A^{W,Q}$ and its ``dual operator'' $\widehat{A}^{W,Q}$ and a crucial part in this will be played by expressing $W$ in terms of $Q$. In the local Fourier transform something very similar occurs: Given a meromorphic connection associated to a suitable Laurent series $f \in \CC(\!(\zeta^{1/p})\!)$ for some $p \ge 1$, the Fourier dual variable $\hat \zeta$ is determined via the equation
$$f(\zeta)=\frac{1}{\zeta \hat \zeta}$$
and one needs to use this to express $\zeta$ as a functions of $\hat \zeta$. It turns out that this inversion process in the case of connections associated to the Kac-Schwarz operators essentially corresponds to the inversion process of expressing $W$ in terms of $Q$.

\section{Two-dimensional quantum gravity}
In this section we recall basic aspects of the $(p,q)$ conformal field coupled to 2D gravity. It is known that this theory is described by a $\tau$-function of the KP hierarchy that is $p$-reduced and satisfies suitable Virasoro constraints. Such a $\tau$-function can be constructed via the Sato Grassmannian and one can translate the problem to finding suitable solutions to the string equation $[P,Q]=1$. We now recall this procedure and refer to \cite{FKN} for more details.  

For all the arguments the Sato Grassmannian is crucial, we hence recall the relevant definitions. Fix an indeterminate $z$ and let 
$\mathcal H=\CC(\!(1/z)\!)$. 
Then $\mathcal H =\mathcal H^{+} \oplus \mathcal H^{-}$ where $\mathcal H^{+}=\CC[z]$ and $\mathcal H^{-}=  \frac{1}{z}\CC[\![1/z]\!]$. Given a $\CC$-subspace $V$ of $\mathcal H$ let $\textrm{pr} : V \longrightarrow \mathcal H^{+}$ denote the projection map. The big cell of the Sato Grassmannian as a set is given by
$$Gr=\{ \CC\textrm{ - subspaces $V$ of } \mathcal H \textrm{ }\big | \textrm{ } \textrm{pr} \textrm{ is an isomorphism}   \}.$$  
Let $x$ be an indeterminate. By a differential operator we will mean an element of the complex Weyl algebra 
$$\textrm{D} := \CC [ x, \partial_{x}].$$
Differential operators act via a Fourier transform on $\mathcal H$:
$$x \mapsto - \partial_{z} \;\;\; \textrm{ and } \;\;\; \partial_{x} \mapsto z.$$ 
Let $\Psi$ denote the set of pseudo-differential operators:
$$\Psi :=\Big \{ \sum_{-\infty} a_{j}(x) \partial_{x}^{j} \textrm{ }\big | \textrm{ } a_{j}(x) \in \CC[\![x]\!] \Big \}.$$
One can reformulate, see \cite{FKN}, the conditions on the $\tau$-function of the $(p,q)$ model as finding a suitable point of the Sato Grassmannian stabilized by certain pseudo-differential operators. For example, the point of the $(p,q)$ model is stabilized by
$$\tilde P = \partial_{x}^{p}$$
$$\tilde Q = \frac{1}{p}\partial_{x}^{1-p}x +\frac{1-p}{2p}\partial_{x}^{-p} +\frac{1}{p} \sum_{i=1}^{p+q} i t_{i} \partial_{x}^{i-p}$$
with $t_{i}=0$ except for $t_{p+q}$. The Kac-Schwarz operator essentially corresponds to the $\tilde Q$-action in this formulation and this operator plays a crucial role in relating the $(p,q)$ model and the $(q,p)$ model via the local Fourier transform. 

The $(p,q)$ models have a generalization to the so-called $(W,Q)$ models where $W$ and $Q$ are polynomials of co-prime degree $p$ and $q$. In this formalism the $(p,q)$ model corresponds to the case $W=z^{p}$ and $Q=z^{q}$. In \cite{KM} a generalization of p -- q duality to W -- Q duality is obtained. A key feature of the work of Kharchev-Marshakov is that the known matrix integral representation of the $\tau$-function of the $(W,z)$ model turns into a duality relation of $\tau$-functions for the case of general $Q$. We describe this now in more detail.

\subsection{p -- q duality}
We briefly recall the basic set-up of the p -- q duality as described in the work of Kharchev and Marshakov \cite{KM}. In the generality of the $(W,Q)$ model this could be called the W -- Q duality. Let $W$ and $Q$ be single-variable polynomials of co-prime degree $p$ and $q$. We already recalled the definition of the Kac-Schwarz operator 
$$A^{W,Q}=\frac{1}{W'(z)} \frac{\textrm{d}}{\textrm{d}z} - \frac{W''(z)}{2W'(z)^{2}}+Q(z).$$
Kharchev-Marshakov define the dual operator as
$$\widehat A^{W,Q}= \frac{1}{Q'(z)} \frac{\textrm{d}}{\textrm{d}z} - \frac{Q''(z)}{2Q'(z)^{2}} - W(z).$$
In fact, what we denote here by $\widehat A^{W,Q}$ is simply called $A^{Q,W}$ in \cite{KM}. We have chosen a different notational convention in order to avoid assigning two interpretations to $A^{Q,W}$.

To obtain a $\tau$-function of the $(W,Q)$ model one should obtain a point $\mathcal V^{(W,Q)}$ of the Sato Grassmannian stabilized by $W$ and by $A^{W,Q}$. Kharchev-Marshakov consider the ``action'' 
$$S_{W,Q}(x,\mu)=Q(x)W(\mu)-\int_{0}^{x} W(y)Q'(y) \textrm{d} y.$$  
Their approach to the p -- q duality is to first write the point of the Grassmannian as
$$\mathcal V^{(W,Q)} = \textrm{Span}_{\CC}(\phi_{0},\phi_{1},\cdots)$$
with the $\phi_{i}$'s of the form
$$\phi_{i}(z) = W'(z)^{1/2}\exp(-S_{W,Q}|_{x=\mu=z}) \int f_{i}(x) \exp(S_{W,Q}(z,x)) Q'(x)^{1/2}  \textrm{d}x$$
for suitable functions $f_{i}$. It is then observed that the desired stabilization of the point $\mathcal V^{(W,Q)}$ by $W$ and by $A^{W,Q}$ translates into the stabilization of the point
$$\mathcal V^{(Q,W)} := \textrm{Span}_{\CC}(f_{0},f_{1},\cdots)$$
by $Q$ and $\widehat A^{W,Q}$. In this manner one obtains a version of the W -- Q duality. In the special case where $(W,Q)=(z^{p},z^{q})$ we call the point $\mathcal V^{(Q,W)}$ of the Sato Grassmannian and the corresponding $\tau$-function the $\widehat{(p,q)}$ model.
As shown in \cite{KM}, one can deduce from the above considerations a relation via matrix integrals between the two relevant $\tau$-functions that generalizes the matrix integral representation of the $\tau$-functions of topological models of 2D gravity: 

For an invertible $N\times N$ Hermitian matrix $M$ let 
$$t_{k} := \frac{\textrm{Trace } M^{-k}}{k}.$$
Then the $\tau$-function $\tau^{(W,Q)}$ of the $(W,Q)$ theory is given in terms of the $\tau$-function $\tau^{(Q,W)}$ of the $(Q,W)$ theory as an $N\times N$ matrix integral
$$ C[V,M] \cdot \int\tau^{(Q,W)}[X] \exp \left (\textrm{Trace} \left (\frac{1}{2} \log Q'(x) + \int_{X}^{M} W(z)Q'(z) +W'(X)Q'(M)  \right ) \right )  \textrm{dX }$$
where $C[V,M]$ is a certain Gaussian integral, see \cite{KM} for a definition. This matrix integral is in a certain sense independent of $N$ and depends on $M$ only through the times $t_{k}$. Even though the matrix integral representation of the W -- Q duality seems rather indirect, some explicit examples are known. Already for simple cases this can become quite complicated: 

For example in \cite{FKN} Fukuma-Kawai-Nakayama show by a different method that the $\tau$-function $\tau^{(2,3)}(x_{1},x_{3},x_{5})$ of the $(2,3)$ model is related to the $\tau$-function $\tau^{(3,2)}(y_{1},y_{2},y_{4},y_{5})$ of the $(3,2)$ model in the following manner: One defines the modified versions of the $\tau$-functions as
$$C^{2,3} := \log \tau^{(2,3)} - \frac{1}{25}\cdot \frac{x_{1}x_{3}^{3}}{x_{5}^{2}}+ \frac{3}{625}\cdot \frac{x_{3}^{5}}{x_{5}^{3}}+\frac{1}{10}\cdot \frac{x_{1}^{2}x_{3}}{x_{5}}+\frac{1}{40} \cdot \log x_{5}$$
and
\begin{eqnarray*}
C^{3,2} &:=& \log \tau^{(3,2)}+ \frac{2}{25} \cdot \frac{y_{1}^{2}y_{4}^{2}}{y_{5}^{2}}+\frac{1}{5}\cdot \frac{y_{1}y_{2}^{2}}{y_{5}} - \frac{16}{125} \cdot\frac{y_{1}y_{2}y_{4}^{3}}{y_{5}^{3}}+ \frac{128}{9375} \cdot \frac{y_{1} y_{4}^{6}}{y_{5}^{5}}- \frac{4}{75} \cdot\frac{y_{2}^{3} y_{4}}{y_{5}^{2}}+\frac{32}{625} \cdot \frac{y_{2}^{2}y_{4}^{4}}{y_{5}^{4}} \\
&& - \frac{512}{46875} \cdot \frac{y_{2} y_{4}^{7}}{y_{5}^{6}}+\frac{4096}{5859375} \cdot\frac{y_{4}^{10}}{y_{5}^{8}}+\frac{1}{15} \cdot \log y_{5}. 
\end{eqnarray*} 
Then an explicit relation between the two sets of variables $\{x_{1},x_{3},x_{5}\}$ and $\{y_{1},y_{2},y_{4},y_{5}\}$ is known and with this relation one obtains
$$C^{2,3}=C^{3,2} .$$

\subsection{D-modules for the $(p,q)$ models}
\label{D-modules-section}
Our approach to relate the two $\tau$-functions $\tau^{(W,Q)}$ and $\tau^{(Q,W)}$ is to consider relevant D-modules. Therefore, in this section we recall how the work of Schwarz associates D-modules to the $(p,q)$ models of 2D gravity, generalizing the construction for the $(p,1)$ case in \cite{DHS} done by Dijkgraaf-Hollands-Sulkowski.

\subsubsection{D-modules and $\tau$-functions}
Consider the complex Weyl algebra
$\textrm{D} = \CC [x, \partial_{x}]$. 
We start by discussing some aspects of the relation between D-modules and the $\tau$-functions of the $(p,q)$ models. 

Consider first the passage from suitable D-modules to $\tau$-functions of the KP hierarchy as discussed in \cite{DHS}. It is shown there that in a certain sense the $\textrm{D}$-module $\textrm{D}/\langle \partial_{x}^{2}-x \rangle$ is associated to the $(2,1)$ model. The main idea, see loc. cit. (Section 2.3), is to consider a suitable subspace of the space of solutions of the differential equation
$$(\partial_{x}^{2}-x)f(x)=0$$
and produce via the action of $\textrm{D}$ a point in the Sato Grassmannian and hence a $\tau$-function that should be the $\tau$-function associated to the $(2,1)$ model. The fact that solutions to the Airy equation are relevant for describing this model was already shown in the early 90's by Kac-Schwarz in \cite{KS}. The reason is that this differential equation can be related to the desired Virasoro constraints of the $\tau$-function. In the formulation via the Sato Grassmannian the constraints correspond to stabilization conditions of a point of the Grassmannian by certain operators and out of solutions of the Airy equation one can produce the basis vectors of such a special point. In order to really produce the desired point of the Grassmannian from the Airy function one has to make several conventions which we now explain: 

A problem is that the Airy function $\textrm{Ai}(x)$ is a solution to the above differential equation but does not yield an element of $\CC(\!(1/x)\!)$. Consider $y$ such that $y^{2}=x$. What is true, see for example \cite{KS}, is that 
$$\varphi := y^{1/2} \exp(2y^{3}/3) \cdot \textrm{Ai}(x) \in \CC(\!(1/y)\!).$$
It follows that after multiplication by the correction factor $y^{1/2} \exp(2y^{3}/3)$ the $\textrm{D}$-module $\textrm{D}/\langle \partial_{x}^{2}-x \rangle$ can be related to a $\tau$-function. Hence there are two questions:
\begin{enumerate}[(i)]
\item 
How to justify the multiplication by the correction factor.
\item
How to justify the change of variables from $x$ to $y$ such that $y^{2}=x$.
\end{enumerate}
We first recall the justification given in \cite{DHS} and then show a different viewpoint via Kac-Schwarz operators that we will then use for the case of general $p$ and $q$.

It is argued in \cite{DHS} that the exponential $\exp(2 y^{3}/3)$ can be discarded via the KP flow. Since in that reference sometimes the arguments mix the Segal-Wilson and the Sato version of the Grassmannian we give some more details: Let $\mathcal F$ denote the semi-infinite exterior power of the space $\mathcal H$. Define the time evolution operator as
$$\textrm{H}=\exp \left (\sum_{i=1}^{\infty} t_{i} z^{i} \right ) \in \CC(\!(1/z)\!) [\![t_{1},t_{2},\cdots]\!].$$
For a point $W$ of the big cell of the Grassmannian there exists a basis $\{v_{0},v_{1},\cdots \}$ for $W$ viewed as a $\CC$-vector space such that
$$v_{i}=z^{i}+ v_{i}^{-} \;\;\; \textrm{ with } \;\;\; v_{i}^{-} \in \mathcal H^{-}.$$
Let 
$$|W \rangle := v_{0} \wedge v_{1} \wedge \cdots \in \mathcal F.$$
For $z^{i}$ with $i \ge 0$ one can associate an element in the dual space of $\mathcal H$ and via the corresponding interior products one obtains a map
$$\langle 0 \big | - \rangle : \mathcal F \longrightarrow \CC.$$
See for example \cite{MUL} for details. The bosonization map
$$\mathcal F \longrightarrow \CC[\![t_{1},t_{2}\cdots,]\!]$$
satisfies
$$ |W\rangle \mapsto \langle 0  \; | \;  \textrm{H} v_{0} \wedge \textrm{H} v_{1} \wedge \cdots \rangle.$$ 
The KP $\tau$-function of a point $W$ of the Sato Grassmannian is known to be the image under the bosonization map of the corresponding point $|W\rangle$ of $\mathcal F$.
In this sense it is suggestive to interpret the exponential $\exp(2 y^{3}/3)$ as simply shifting the desired $\tau$-function with respect to the third KP time. In any case, we now show that getting rid of the correction factor as well as making the coordinate change $y=x^{2}$ can be re-interpreted in a very simple manner in terms of the Kac-Schwarz operators: 

For simplicity of notation define $A$ and $a$ via 
$$A:=A^{W=z^{p},Q=z^{q}} =\frac{1}{p z^{p-1}} \frac{\textrm{d}}{ \textrm{d} z} + a.$$
Then
$$\rho^{-1} A \rho = \frac{1}{p z^{p-1}} \frac{\textrm{d}}{ \textrm{d} z} =:\frac{\textrm{d}}{\textrm{d} z^{p}} \; \; \; \textrm{ for } \; \; \;\rho=z^{(p-1)/2}\exp(-\frac{p}{p+q} \cdot z^{p+q}).$$
Note that in the terminology of \cite{KM} this factor is essentially
$$W'(z)^{1/2} \exp(- S_{W,Q}|_{x=\mu=z}) $$ 
with $W=z^{p}$ and $Q=z^{q}$. Furthermore, the coordinate change $z^{p}=x$ is exactly what is needed in order to obtain $\rho^{-1} A \rho = \frac{\textrm{d}}{\textrm{d} x}$. Hence one can reinterpret the conventions made in \cite{DHS} in order to attach a D-module to the $(2,1)$ model as being the conventions necessary to turn the Kac-Schwarz operator into a meromorphic connection. 

We now show that with this convention, one can produce a D-module for the general $(p,q)$ model via the quantization scheme of \cite{SCH}.

\subsubsection{D-modules from quantization of differential operators}
\label{quantization-sub-section}
In \cite{SCH}, \cite{SCH2} Schwarz developed a notion of quantization of commuting differential operators. As part of that work there is an associated D-module. It has recently been shown by Liu-Schwarz that the quantization of the commuting pair of differential operators $\partial^{q}$ and $\partial^{p}$ produces the $\tau$-function of the $(p,q)$ model, see \cite{LS}, and in the $(p,1)$ case this produces the D-module attached to the $(p,1)$ model by Dijkgraaf-Hollands-Sulkowski. It then follows that the Schwarz quantization scheme yields a D-module associated to the general $(p,q)$ model. We now describe this in more detail.

Already for the $(2,1)$ model one can see easily the appearance of the so-called companion matrices: For the Kac-Schwarz operator
$$A:=\frac{1}{2z} \frac{\textrm{d}}{\textrm{d} z} -\frac{1}{4z^{2}} + \frac{3z}{2}$$
it is shown by Kac-Schwarz in \cite{KS} that the space
$\langle \varphi, A \varphi, z^{2} \varphi,\cdots \rangle$
yields a point in the big cell of the Grassmannian which yields the $\tau$-function of the $(2,1)$-model. 
Note that $A^{2} \varphi = \textrm{constant} \cdot z^{2} \varphi$ and hence one sees here through the $A$ action the occurrence of the matrix
$$\begin{bmatrix}
0&1 \\
z^{2} & 0
\end{bmatrix}$$
and this, after the coordinate change $z^{2}=x$, in fact can be viewed as the so-called companion matrix involved in the quantization of the differential operators $(\partial^{1},\partial^{2})$ in the sense of \cite{SCH}. We now recall this quantization scheme.

A differential operator $\sum_{i=0}^{n} a_{i}\partial^{i}$ is called normalized if $a_{n}=1$ and $a_{n-1}=0$.
If $Q_{0}$ is a normalized differential operator of degree $q$ then it is known that there exists  
$$S\in 1+\Psi^{(-1)}= \Big \{1 + \sum_{-\infty}^{-1} s_{j}(x) \partial_{x}^{j} \textrm{ }\big | \textrm{ } s_{j}(x) \in \CC[\![x]\!]\Big \}$$ 
such that 
$$SQ_{0}S^{-1}=\partial_{x}^{q}.$$
Let $V=S\mathcal H^{+} \in Gr$. For $0\le i \le q-1$ define
$$v_{i} = S z^{i} \in V.$$
Then every $v \in V$ can be written as 
$v=\sum c_{i}(z^{q}) v_{i}$ 
where $c_{i}(z^{q})$ is a polynomial in $z^{q}$. For a differential operator $P_{0}$ the companion matrix $M$ of $(P_{0},Q_{0})$ is defined via the action of $SP_{0}S^{-1}$ on the vectors $v_{i}$: 

It is the $q \times q$ matrix whose $(i+1,j+1)$'th entry for $0 \le i,j \le q-1$ is denoted by $M_{ij}$ and which is defined via
$$(SP_{0}S^{-1}) \cdot v_{i} = \sum_{j} M_{ij}(z^{q}) v_{j}$$
where $M_{ij}(z^{q})$ is a polynomial in $z^{q}$. For example, the main case of interest for us will be 
$$(P_{0},Q_{0})=(\partial^{q},\partial^{p})$$
with $p$ and $q$ co-prime. Write $q = sp + r$ with $0 \le r < p$. In this case the companion matrix is given by
$$M(p,q)=\begin{bmatrix} 
0 & & & 0 & z^{sp} &&& \\
 &\ddots & &&&\ddots &\\
0 & & &0 &&&z^{sp}\\
z^{(s+1)p} &  & &0 &0&&0\\
& \ddots & & &&\ddots&& \\
0& & & z^{(s+1)p} & 0 &&0
\end{bmatrix}$$ 
where there are $r$ columns containing a $z^{(s+1)p}$.

The notion of quantization developed in \cite{SCH} is based on fixing the companion matrix and letting the value of the commutator of two differential operators vary:
\begin{defi}[Schwarz]
Let $(P_{0}, Q_{0})$ be commuting differential operators with $Q_{0}$ normalized and companion matrix $M$. A quantization of $(P_{0},Q_{0})$ is a pair $(P,Q)$ of differential operators with $Q$ normalized such that 
$$[P,Q]=1$$ 
and such that the corresponding companion matrix equals $M$.
\end{defi}
For such a quantization it is known how to obtain an associated point of $Gr$ and hence a KP $\tau$-function. Let $p$ and $q$ be positive co-prime integers. It is shown by Liu-Schwarz in \cite{LS} that the quantization of $(\partial^{q},\partial^{p})$ yields the $\tau$-function of $(p,q)$ model of 2D quantum gravity. The reason that the $(p,q)$ model corresponds to the quantization of $(\partial^{q},\partial^{p})$ and not to the quantization of $(\partial^{p},\partial^{q})$ simply comes from different notational conventions. Note furthermore that the calculations in \cite{LS} are obtained for the $\hbar$-dependent string equation but this does not affect the following discussion.

In order to construct quantizations of commuting differential operators, in \cite{SCH} under suitable conditions a point $V$ of $Gr$ with $z^{p}$-basis $\{v_{0},\cdots,v_{p-1}\}$ is constructed such that
$$\big(\frac{1}{pz^{p-1}}\frac{\textrm{d}}{\textrm{d}z} + b(z) \big ) v_{i} = \sum M_{ij} v_{j}$$
for $b(z) \in \CC(\!(z)\!)$. To do so, for technical reasons one replaces the companion matrix $M$ by a gauge equivalent matrix $B$ given by
$$B_{ij}=M_{ij} z^{j-i}-\frac{i\delta_{ij}}{pz^{p}}.$$
The advantage is that under suitable assumptions the leading order coefficient matrix of $B$ has $p$ distinct eigenvalues and it follows from classical results obtained by Levelt and Turrittin that the connection 
$$\frac{\textrm{d}}{\textrm{d}x}-M(x)$$
can be transformed via gauge transformation to a connection with diagonal connection matrix. Here we set $x=z^{p}$. This diagonalization enables then to construct the desired point $V$ of the Sato Grassmannian.

In the case of the companion matrix associated to $(\partial^{q},\partial^{p})$ one can show that one of these diagonal entries is up to a normalization term given by
$$ z^{q} - \frac{p-1}{2p} \frac{1}{z^{p}}.$$
Since this is the crucial part of the Kac-Schwarz operator associated to the $(p,q)$ model one can deduce that the point of the Sato Grassmannian associated to the quantization has a KP $\tau$-function that satisfies the Virasoro constraints of the $(p,q)$ model, as desired. To sum up, consider the connection
$$\nabla_{p,q} := \frac{\textrm{d}}{\textrm{d}x} - M(p,q)(x) \; \; \; \textrm{ with } \; \; \; x:=z^{p} $$
on the formal punctured disc at $\infty$. Then one has:
\begin{thm}[Liu-Schwarz]
Let $p$ and $q$ be positive co-prime integers.
The $\textrm{\emph{D}}$-module $\nabla_{p,q}$ of the quantization of $(\partial^{q},\partial^{p})$ is associated to the $(p,q)$ model of 2D quantum gravity.
\end{thm}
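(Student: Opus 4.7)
The plan is to follow the construction outlined in the paragraphs preceding the statement: starting from the companion matrix $M(p,q)$ associated with the commuting pair $(\partial^{q},\partial^{p})$, produce via the Schwarz quantization scheme a point $V$ of the Sato Grassmannian, and then identify the resulting KP $\tau$-function as the one satisfying the Virasoro constraints of the $(p,q)$ model.

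First I would pass from $M(p,q)$ to the gauge-equivalent matrix $B$ with $B_{ij} = M_{ij}z^{j-i} - i\delta_{ij}/(pz^{p})$. This substitution does not alter the isomorphism class of the connection but forces the leading coefficient of $B$ (as a Laurent series in $z$) to have $p$ pairwise distinct eigenvalues when $p$ and $q$ are co-prime. Granting this genericity, the classical Levelt--Turrittin theorem produces a formal gauge transformation over $\CC(\!(1/z)\!)$ diagonalizing the connection $d/dx - M(p,q)(x)$, with $x=z^{p}$, into a sum of rank-one irregular pieces. As in the construction recalled from \cite{SCH}, the diagonalized basis then assembles into a point $V \in Gr$ with a distinguished $z^{p}$-basis $\{v_{0},\ldots,v_{p-1}\}$ on which the operator $\frac{1}{pz^{p-1}}\frac{d}{dz}+b(z)$ acts through $M$.

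Second I would compute one of the resulting diagonal entries explicitly. A direct eigenvalue calculation on the leading term of $B$, combined with the explicit $-i\delta_{ij}/(pz^{p})$ correction, should yield (up to a normalization constant) the expression
\[
z^{q} \;-\; \frac{p-1}{2p}\cdot\frac{1}{z^{p}}.
\]
Comparing with the Kac--Schwarz operator
\[
A^{z^{p},z^{q}} \;=\; \frac{1}{pz^{p-1}}\frac{d}{dz} \;-\; \frac{p-1}{2pz^{p}} \;+\; z^{q},
\]
this is exactly the part of $A^{z^{p},z^{q}}$ not absorbed into $d/dx$. Consequently the point $V$ is simultaneously stabilized by $z^{p}$ (which is built into the quantization through the form of the companion matrix) and by $A^{z^{p},z^{q}}$. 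By the translation recalled in Section 2 following \cite{FKN}, such a point of $Gr$ is $p$-reduced and its KP $\tau$-function satisfies the Virasoro constraints of the $(p,q)$ model. Hence the $\tau$-function attached to the quantization of $(\partial^{q},\partial^{p})$ is the $(p,q)$ $\tau$-function, and the associated D-module is $\nabla_{p,q}$.

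The main obstacle I expect is the eigenvalue computation for the leading coefficient of $B$: one must verify that, in the co-prime case, the characteristic polynomial of the leading matrix really does have $p$ distinct roots of the form $\zeta \cdot z^{q}$ for $\zeta$ a $p$-th root of an appropriate constant, and then track the $-i\delta_{ij}/(pz^{p})$ correction through the Levelt--Turrittin diagonalization so that it contributes precisely the coefficient $-(p-1)/(2p)$ to $1/z^{p}$ on the diagonal. This coefficient is exactly what encodes the central-charge-type shift in the Virasoro constraint, so the whole theorem hinges on getting this bookkeeping right; once it is, the comparison with the Kac--Schwarz formulation of the $(p,q)$ model is formal.
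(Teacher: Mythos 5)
Your proposal follows essentially the same route as the paper, which itself only sketches this result (attributed to Liu--Schwarz \cite{LS}): the passage from the companion matrix $M(p,q)$ to the gauge-equivalent matrix $B$, the Levelt--Turrittin diagonalization, the identification of a diagonal entry as $z^{q}-\frac{p-1}{2p}\cdot\frac{1}{z^{p}}$, and the comparison with the Kac--Schwarz operator to conclude that the quantization's $\tau$-function satisfies the Virasoro constraints of the $(p,q)$ model. Your anticipated eigenvalue bookkeeping is also sound: since $\gcd(p,q)=1$, the leading coefficient of $B$ is $z^{q}$ times a full-cycle permutation matrix, whose $p$ distinct eigenvalues are $z^{q}$ times the $p$-th roots of unity, exactly the genericity the paper's cited construction from \cite{SCH} requires.
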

Note that one could interpret $\nabla_{p,q}$ as a regular connection on the formal punctured disc centered at the point $0$ of the Riemann sphere. However, it follows from the above result that instead one should interpret them as  irregular connections on the formal punctured disc centered at $\infty$. 

Define the ``dual connection''
$$\widehat{\nabla}_{p,q} := \frac{\textrm{d}}{\textrm{d}x} + M(q,p)(x) \; \; \; \textrm{ with } \;\; \; x:=z^{q}.$$
Note that the companion matrix of $-\partial^{p}$ with respect to $\partial^{q}$ is simply $-M(q,p)$. One can deduce the following result which will be of importance for our considerations:
\begin{cor}
The $\textrm{\emph{D}}$-module $\widehat{\nabla}_{p,q} $ of the quantization of $(-\partial^{p},\partial^{q})$ corresponds to the $\widehat{(p,q)}$ model of 2D quantum gravity.
\end{cor}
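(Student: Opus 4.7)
The plan is to derive the corollary as a direct application of the Liu--Schwarz theorem to the commuting pair $(-\partial^{p},\partial^{q})$, via two bookkeeping steps: first swap the roles of $p$ and $q$ in the Liu--Schwarz construction, then track the effect of the sign flip $\partial^{p} \mapsto -\partial^{p}$ on both the companion matrix and the resulting point of the Sato Grassmannian.

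First, apply the Liu--Schwarz theorem with $p$ and $q$ exchanged: quantization of $(\partial^{p},\partial^{q})$ yields the connection $\frac{\textrm{d}}{\textrm{d}x} - M(q,p)(x)$ with $x=z^{q}$, associated to the $(q,p)$ model. Next, replace $\partial^{p}$ by $-\partial^{p}$; by the observation recalled immediately before the statement of the corollary, the companion matrix becomes $-M(q,p)$. Feeding this signed companion matrix into the Schwarz construction produces the connection
$$\frac{\textrm{d}}{\textrm{d}x} - \bigl(-M(q,p)(x)\bigr) = \frac{\textrm{d}}{\textrm{d}x} + M(q,p)(x) = \widehat{\nabla}_{p,q}.$$
To identify the corresponding $\tau$-function with the $\widehat{(p,q)}$ model, recall that by the Kharchev--Marshakov description the point $\mathcal{V}^{(Q,W)}$ with $(W,Q)=(z^{p},z^{q})$ is characterized as the point stabilized by $z^{q}$ and by the dual operator $\widehat{A}^{z^{p},z^{q}} = \frac{1}{qz^{q-1}} \frac{\textrm{d}}{\textrm{d}z} - \frac{q-1}{2qz^{q}} - z^{p}$. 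Running the diagonalization argument sketched above for the $(p,q)$ case, but with $p$ and $q$ interchanged and starting from the signed companion matrix $-M(q,p)$, the relevant Levelt--Turrittin diagonal entry becomes $-z^{p} - \frac{q-1}{2q}\cdot\frac{1}{z^{q}}$, which is exactly the non-derivative part of $\widehat{A}^{z^{p},z^{q}}$. Consequently the produced point of the Grassmannian satisfies the Virasoro-type stabilization defining the $\widehat{(p,q)}$ model.

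The main obstacle is the sign-tracking in the Levelt--Turrittin step. The auxiliary matrix $B_{ij} = M_{ij}z^{j-i} - \frac{i\delta_{ij}}{qz^{q}}$ is affine rather than linear in $M$, so one cannot simply flip the sign of the diagonal form computed in the $(q,p)$ case. One must instead redo the leading-order eigenvalue calculation for $-M(q,p)$ and verify that the leading eigenvalues reverse sign (turning $z^{p}$ into $-z^{p}$) while the subleading $\frac{1}{z^{q}}$ correction, which comes from the universal shift $-\frac{i\delta_{ij}}{qz^{q}}$ and is independent of the sign of $M$, is preserved unchanged. A secondary subtlety is that the Schwarz quantization axiom $[P,Q]=1$ must be reconciled with the sign change on $\partial^{p}$; this is handled by absorbing the sign into the choice of the quantized operator $P$, which does not affect the companion matrix nor the constructed point of $Gr$.
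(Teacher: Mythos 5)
Your proposal is correct and coincides with the paper's (implicit) argument: the corollary is deduced exactly as you do, by applying the Liu--Schwarz theorem with $p$ and $q$ interchanged together with the remark, stated immediately before the corollary, that the companion matrix of $-\partial^{p}$ with respect to $\partial^{q}$ is $-M(q,p)$, so that the Levelt--Turrittin diagonal entry becomes $-z^{p}-\frac{q-1}{2q}\cdot\frac{1}{z^{q}}$, the non-derivative part of the dual Kac--Schwarz operator $\widehat{A}^{z^{p},z^{q}}$, which identifies the constructed point of the Grassmannian with the point $\mathcal V^{(Q,W)}$ defining the $\widehat{(p,q)}$ model. Your additional sign-tracking through the affine matrix $B_{ij}$ and the reconciliation of $[P,Q]=1$ with the sign flip are correct bookkeeping that the paper leaves implicit.
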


\section{Local Fourier transform}
\label{Fourier-section}
The local Fourier transform has an interesting history. Originally, partially inspired by work of Witten, Laumon defined a local Fourier transform in the context of $\ell$-adic sheaves in order to obtain results about $\epsilon$-factors. This Fourier transform was then adapted to the context of complex connections by Esnault and Bloch and also Lopez. In the complex as well as in the $\ell$-adic case, the relation of the local Fourier transform to classical notions of Fourier transforms can be seen through their relations with suitable global Fourier transforms via a stationary phase principle. See \cite{LAU} and \cite{SAB}. 

We will follow the approach of Arinkin in describing the transform. It is developed in the context of $\PP^{1}$ and there exists several versions: $\mathcal F^{(x,\infty)}$, $\mathcal F^{(\infty,x)}$, $\mathcal F^{(\infty,\infty)}$ where $x$ is a point of $\PP^{1}$. For our purposes the last one is the most relevant and we now recall its definition. Note that the point $\infty$ in our applications will be the puncture of the Riemann Sphere involved in the Krichever construction.  

The local Fourier transform $\mathcal F^{(\infty,\infty)}$ concerns connections on the formal punctured disc at $\infty$. In the terminology of \cite{ARI} this corresponds to so-called holonomic $\textrm{D}_{K}$-modules where 
$$K=\CC (\!(\zeta)\!) \; \; \; \textrm{ and } \; \; \; \textrm{D}_{K}:=K[ \frac{\textrm{d}}{\textrm{d}\zeta}].$$
The category $\textrm{Hol}(\textrm{D}_{K})$ of holonomic $\textrm{D}_{K}$-modules is defined by Arinkin to be the full subcategory of the category of left $\textrm{D}_{K}$-modules which are finite dimensional as a $K$-vector space. In other words, one considers the category of finite-dimensional $K$-vector spaces with a meromorphic connection: 

For a finite-dimensional vector space $V$ over $\CC(\!(\zeta)\!)$ by a connection one simply means a $\CC$-linear endomorphism $\nabla$ that satisfies the Leibniz rule $\nabla(f v)=f \nabla(v)+( \frac{\textrm{d}}{\textrm{d} \zeta} f) v$ for all $f \in \CC(\!(\zeta)\!)$ and all $v \in V$. In terms of matrices such a connection is described by $$\nabla = \frac{\textrm{d}}{\textrm{d} \zeta} + B$$ for $B \in \mathfrak{gl}_{d}(\CC(\!(\zeta)\!))$ with $d = \dim_{\CC(\!(\zeta)\!)}(V)$. The morphisms are the morphisms of $K$-vector spaces that are compatible with the connections. 

Due to work of Levelt and Turrittin one has a classification of the category of holonomic $\textrm{D}_{K}$-modules. The formulation involves passing to a finite extension $\CC(\!(\zeta^{1/p})\!)$ of $\CC(\!(\zeta)\!)$ for suitable $p$. One of the reasons for passing to such an extension can be to have suitable eigenvalues of the connection matrix be contained in the field of scalars. See for example \cite{VAR} for historical context.

For $p\ge 1$ and $f \in \CC(\!(\zeta^{1/p})\!)$ denote by $E_{f,p}$ the $1$-dimensional vector space over $\CC(\!(\zeta^{1/p})\!)$ with the connection
$$\frac{\textrm{d}}{\textrm{d}\zeta} +\frac{f}{\zeta}.$$
The isomorphism class of $E_{f,p}$ depends only on the equivalence class of $f$ under the equivalence relation
$$f \sim \tilde f \; \; \; \;\; \textrm{ if } \; \; \; \;\;  f- \tilde f \in \zeta^{1/p} \CC[\![\zeta^{1/p}]\!] + \frac{1}{p} \ZZ.$$ 
Moreover, one can replace $f(\zeta)$ by $f(\mu_{p} \zeta)$ where $\mu_{p}$ is a $p$'th root of unity without changing the isomorphism class. Every holonomic $\textrm{D}_{K}$-module is built out of objects of the form $E_{f,p}$ in the following sense, see for example \cite{GRA} (Proposition 2.3): 

For $m_{i} \ge 1$ let $J_{m_{i}}$ be the object of $\textrm{Hol}(\textrm{D}_{K})$ given by the vector space $K^{m_{i}}$ with the connection
$$\nabla_{m_{i}}=\frac{\textrm{d}}{\textrm{d} \zeta} + \frac{1}{\zeta} \begin{bmatrix}
0 & 1 & &\\
&0&1 &\\
&& \ddots &1 \\
&&& 0
\end{bmatrix}$$
Then every object of $\textrm{Hol}(\textrm{D}_{K})$ is the direct sum of objects of the form
$$E_{f_{i},p_{i}} \otimes J_{m_{i}}$$
with $E_{f_{i},p_{i}}$ being irreducible, which is equivalent to the fact that $f_{i}$ is not an element of $\CC(\!(\zeta^{1/r})\!)$ for some $0<r <p_{i}$. Up to permutations this decomposition is unique. Note that for two objects $(M_{1},\nabla_{1})$ and $(M_{2},\nabla_{2})$ of $\textrm{Hol}(\textrm{D}_{K})$ their tensor product has underlying vector space $M_{1} \otimes_{\CC(\!(\zeta)\!)} M_{2}$ with connection given via
$$m_{1}\otimes m_{2} \mapsto \nabla_{1}(m_{1})\otimes m_{2} + m_{1} \otimes \nabla_{2}(m_{2})$$
for all $m_{1} \in M_{1}$ and $m_{2} \in M_{2}$.

The local Fourier transform is defined on a subcategory of the category of holonomic $\textrm{D}_{K}$-modules which we now describe. First, one defines
$$\textrm{slope} \left( E_{f,p} \right )= \textrm{Max}\left (0,- \textrm{ord}_{\zeta}(f)\right ).$$
Let $\textrm{Hol}(\textrm{D}_{K})^{>1}$ denote the full subcategory of $\textrm{Hol}(\textrm{D}_{K})$ of objects all of whose irreducible components have slopes bigger than $1$. The local Fourier transform is a functor
$$\mathcal F^{(\infty,\infty)} : \textrm{Hol}(\textrm{D}_{K})^{>1}\longrightarrow\textrm{Hol}(\textrm{D}_{K})^{>1}.$$
To describe it more precisely, let $j_{\infty}$ denote the inclusion of the punctured disc at infinity into $\AA^{1}$ and let $\mathcal F_{\textrm{glob}}$ denote the usual global Fourier transform of D-modules on the plane. Arinkin's approach to define the local Fourier transform is that it should make the following diagram commute:
$$\xymatrix{\textrm{Hol}(\textrm{D}_{K})^{>1} \ar[rr]^{j_{\infty,*}} \ar[dd]^{\mathcal F^{(\infty,\infty)}} & & \textrm{D} - \textrm{Mod} \ar[dd]^{\mathcal F_{\textrm{glob}}}\\ 
&&\\
\textrm{Hol}(\textrm{D}_{K})^{>1}  \ar[rr]^{j_{\infty,*}} &&\textrm{D} - \textrm{Mod}}$$
However, in general this does not determine the transform and continuity conditions are imposed. We refer to \cite{ARI} for details. It is important for us that one can give explicit equations to describe the local Fourier transform. We follow the exposition of Graham-Squire given in \cite{GRA}. 

For our purposes it suffices to describe the local Fourier transform of an irreducible $E_{f,p}$: Let $f \in \CC(\!(\zeta^{1/p})\!)$ be of order $-s/p$ with $s  > p$. Let $\hat \zeta$ denote the variable of the Fourier transform. It is defined via setting
$$f(\zeta)= \frac{1}{\zeta \hat \zeta}.$$
Since $f$ is of order less than $-1$, the Laurent series $\zeta f(\zeta)$ has negative order $(p-s)/p$ and it follows for example from \cite{GRA} (Lemma 5.1) that it has a formal compositional inverse in $\CC(\!(1/\zeta^{1/(s-p)})\!)$. Note that there is an error in the statement of the result in \cite{GRA}. One obtains an expression of $\zeta$ as a Laurent series with respect to $\hat \zeta^{1/(s-p)}$. Then one would like to define a new function $g \in \CC(\!(\hat \zeta^{1/(s-p)})\!)$ via
$$g(\hat \zeta):=-f(\hat \zeta)+\frac{s}{2(s-p)}$$
where $f(\hat \zeta)$ is obtained from $f(\zeta)$ by substituting for $\zeta$ its expression as an element of $\CC(\!(\hat \zeta)\!)$. A priori, the composition of the two Laruent series might not be well defined. However, without changing the isomorphism class of $E_{f}$ one can assume that $f \in \CC[\zeta_{p},\zeta_{p}^{-1}]$ and then this issue does not arise. With this assumption on $f$ the function $g$ is a well defined element of $\CC(\!(\hat \zeta^{1/(s-p)})\!)$ and it follows from \cite{GRA} (Theorem 3.6) that the Fourier transform is given by
$$\mathcal F^{(\infty,\infty)} \big ( E_{f,p} \big ) \cong E_{g,s-p}.$$

\subsection{Reformulation of $\mathcal F^{(\infty,\infty)}$}
\label{reformulation-section}
Due to notational conventions concerning Kac-Schwarz operators, it will be useful for our considerations to derive formulas for the local Fourier transform of elements of $\textrm{Hol}(\textrm{D}_{K})^{>1}$ if they are written in terms of the local coordinate at $0$ instead of $\infty$. To avoid notational confusion we denote this local Fourier transform by $\mathcal F^{\textrm{loc}}$ instead of $\mathcal F^{(\infty,\infty)}$.  

Let 
$$F = \sum_{t \le i \le s} a_{i} x^{i/p} \in \CC[x^{1/p},1/x^{1/p}]$$ 
for suitable $t, s$ with $a_{t}$ and $a_{s}$ non-zero and suppose that $F$ is not an element of $\CC(\!( 1/x^{1/r})\!)$ for some $0 < r <p$. Suppose further that $s  > p$. The connection 
$$ \frac{\textrm{d}}{\textrm{d} x}  + \frac{F(x)}{x}$$
in terms of $\zeta := 1/x$ is given as
$$ \frac{\textrm{d}}{\textrm{d} \zeta} -\frac{F(1/\zeta)}{\zeta} = E_{-F(1/\zeta),p}.$$
Hence, the variable $\hat \zeta$ of the Fourier transform satisfies
$$\frac{1}{\hat \zeta}=-\zeta \cdot F(1/\zeta).$$
The can be viewed as an element of $\CC(\!(\zeta^{1/p})\!)$. It follows for example from \cite{GRA} (Lemma 5.1) that this Laurent series has a compositional inverse $h(\zeta)$ in $\CC(\!(1/\zeta^{1/(s-p)})\!)$. It then follows that
$$\zeta=h(1/\hat \zeta) \in \CC(\!(\hat \zeta^{1/(s-p)})\!)$$
and
$$\frac{1}{h(\zeta)}= (-\frac{F(\zeta)}{\zeta})^{-1} \in \CC(\!(1/\zeta^{1/(s-p)})\!)$$
where the super-script $-1$ denotes the compositional inverse. For $\hat x = 1/\hat \zeta$ one obtains 
\begin{eqnarray*}
g(\hat \zeta)&=&\frac{s}{2(s-p)}+F(1/\zeta) \\
&=&\frac{s}{2(s-p)}-\frac{1}{\hat \zeta}\cdot \frac{1}{\zeta}\\
&=&\frac{s}{2(s-p)}- \frac{1}{\hat \zeta} \cdot \frac{1}{h(1/\hat \zeta)}\\
&=& \frac{s}{2(s-p)}-\frac{1}{\hat \zeta} \cdot \left (- \frac{F(x)}{x} \right)^{-1} ( \hat x).
\end{eqnarray*}
One obtains
\begin{eqnarray*}
\mathcal F^{\textrm{loc}} \left ( \frac{\textrm{d}}{\textrm{d} x} +\frac{F(x)}{x} \right ) &\cong & \frac{\textrm{d}}{\textrm{d} \hat x}-\frac{s}{2(s-p)} \cdot \frac{1}{\hat x} + \left (- \frac{F(x)}{x} \right)^{-1}(\hat x) \\
&\cong & \frac{\textrm{d}}{\textrm{d} \hat x}+\frac{s}{2(s-p)} \cdot \frac{1}{\hat x} +\left (- \frac{F(x)}{x} \right)^{-1} (\hat x).
\end{eqnarray*}
This reformulation of the local Fourier transform will be useful when we apply it to connections associated to Kac-Schwarz operators.

\section{Fourier duality of Kac-Schwarz connections and p -- q duality}
In this section we prove the main theorems. First we define the Kac-Schwarz connections and obtain a Fourier duality for them. We then deduce the p -- q duality as a special case.

Stabilization of points of the Sato Grassmannian by Kac-Schwarz operators corresponds to Virasoro constraints for the corresponding KP $\tau$-functions. We now describe a procedure for attaching irregular connections on the formal punctured disc to the Kac-Schwarz operators of the $(W,Q)$ model. Recall that one of the relevant operators is given by
$$A^{W,Q}=\frac{1}{W'(z)} \frac{\textrm{d}}{\textrm{d}z} - \frac{W''(z)}{2W'(z)^{2}}+Q(z).$$
It will be important for us to work with certain normalized versions of the Kac-Schwarz operators, essentially discussed by Kharchev-Marshakov in \cite{KM}: One makes the change of variables
$$\ol{\mu}^{p}=W(z) $$
and
$$\ol{\nu}^{q} = Q(z).$$
As noted in \cite{KM}, this yields equivalent solutions to the KP hierarchy. Let us now make the further change of variables $\ol{\mu}^{p} = x$. Then one can write the Kac-Schwarz operator as
$$\ol{A}^{W,Q}=\frac{\textrm{d}}{\textrm{d}x} -\frac{p-1}{2p} \frac{1}{x} +Q + \mathcal O(\frac{1}{x^{2}}).$$ 
If one interprets this as a connection at $\infty$ rather than at $0$ then the $\mathcal O(1/x^{2})$ contribution does not change the isomorphism class of the connection. This leads to:
\begin{defi}
The Kac-Schwarz connection of the $(W,Q)$ model is given by
$$\ol{A}^{W,Q}=\frac{\textrm{d}}{\textrm{d}x} -\frac{p-1}{2p} \frac{1}{x} +Q.$$ 
We will denote furthermore by $\widehat{\ol{A}}^{W,Q}$ the connection obtained in the analogous manner from the dual Kac-Schwarz operator $\widehat{A}^{W,Q}$.
\end{defi}

\begin{thm}
\label{Kac-Schwarz-theorem}
Let
$W$ 
and
$Q$ be polynomials of co-prime degrees $p$ and $q$. For $p$ odd one has
$$\mathcal F^{\textrm{\emph{loc}}} \left (\widehat{\ol{A}}^{Q,W} \right ) \cong \ol{A}^{Q,W}.$$
\end{thm}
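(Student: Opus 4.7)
The plan is to write $\widehat{\ol{A}}^{Q,W}$ explicitly in the form $\frac{d}{dx}+F(x)/x$ and then apply the reformulation of $\mathcal F^{\textrm{loc}}$ derived in Section~\ref{reformulation-section} term by term. Unfolding the definitions,
$$\widehat{\ol{A}}^{Q,W} \;=\; \frac{d}{dx} \;-\; \frac{p-1}{2p}\cdot\frac{1}{x} \;-\; Q\big(z(x)\big),$$
where $z(x)\in\CC(\!(x^{-1/p})\!)$ is the compositional inverse of $x=W(z)$ at infinity. Thus $F(x)=-\tfrac{p-1}{2p}-xQ(z(x))$ is a Laurent polynomial in $x^{1/p}$ of top exponent $(p+q)/p$, so in the paper's notation $s=p+q>p$; co-primality of $p$ and $q$ gives $\gcd(p+q,p)=1$ and hence $F\notin\CC(\!(1/x^{1/r})\!)$ for any $0<r<p$, so the reformulation applies and yields
$$\mathcal F^{\textrm{loc}}\big(\widehat{\ol{A}}^{Q,W}\big) \;\cong\; \frac{d}{d\hat x} \;-\; \frac{p+q}{2q}\cdot\frac{1}{\hat x} \;+\; \Big(\!-\tfrac{F(x)}{x}\Big)^{\!-1}\!(\hat x).$$

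Next I would identify the last summand. The term $\tfrac{p-1}{2p}\cdot\tfrac{1}{x}$ is of strictly negative order in $x^{1/p}$, so it contributes only subleading corrections to the compositional inverse at $\hat x=\infty$; these corrections lie in $\hat\zeta^{1/q}\CC[\![\hat\zeta^{1/q}]\!]$ and so are killed by the Levelt--Turrittin equivalence relation on the output $E_{g,q}$. Modulo such corrections the equation to invert is $\hat x = Q(z(x))$ with $x=W(z)$, which is read as the composition $x\mapsto z\mapsto\hat x$ through $W^{-1}$ and $Q$: its inverse sends $\hat x\mapsto z_Q(\hat x)\mapsto W\big(z_Q(\hat x)\big)$, where $z_Q$ is the compositional inverse of $Q$ at infinity. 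This is precisely the ``expression of $W$ in terms of $Q$'' highlighted in the introduction, and it coincides with the $W$ appearing in
$$\ol{A}^{Q,W} \;=\; \frac{d}{d\hat x} \;-\; \frac{q-1}{2q}\cdot\frac{1}{\hat x} \;+\; W,$$
since the $\ol{A}^{Q,W}$ convention reads $W$ as $W(z)$ with $Q(z)=\hat x$.

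It remains to match the $1/\hat x$ coefficients. The Fourier output has coefficient $-(p+q)/(2q)$ while the target has $-(q-1)/(2q)$, so their difference is $-(p+1)/(2q)$. Isomorphism classes of $E_{g,q}$ are insensitive to shifts of the constant term of $g$ by elements of $\tfrac{1}{q}\ZZ$, which on the $\hat x$-side translates to freedom modulo $\tfrac{1}{q}\ZZ$ in the coefficient of $1/\hat x$. Hence the two connections are isomorphic precisely when $(p+1)/2\in\ZZ$, i.e.\ when $p$ is odd; this is the sole place the parity hypothesis enters. The main obstacle is the bookkeeping behind the compositional inversion: one must check that \emph{both} the explicit $\tfrac{p-1}{2p}\cdot\tfrac{1}{x}$ correction \emph{and} the tacit $\mathcal O(1/x^{2})$ tail discarded in the very definition of the Kac--Schwarz connection contribute only to the ``equivalence ideal'' of the output and thus do not alter its isomorphism class.
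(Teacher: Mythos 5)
Your architecture is the same as the paper's (reduce to the Section~\ref{reformulation-section} reformulation, invert to get $W\circ Q^{-1}$, compare residue coefficients modulo $\tfrac{1}{q}\ZZ$), but the step where you carry the term $-\tfrac{p-1}{2p}\cdot\tfrac{1}{x}$ inside $F$ through the transform contains a genuine error. You assert that its effect on the compositional inverse lands in $\hat\zeta^{1/q}\CC[\![\hat\zeta^{1/q}]\!]$ and is killed by the equivalence relation. It is not: write $\phi=Q\circ W^{-1}$, $h_{0}=W\circ Q^{-1}$, and invert $\hat x=\phi(x)+\tfrac{p-1}{2p}x^{-1}$. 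Then $x=h_{0}(\hat x)+\delta$ with
$$\delta \;=\; -\frac{p-1}{2p}\cdot\frac{1}{h_{0}\,\phi'(h_{0})}+\cdots \;=\; -\frac{p-1}{2q}\cdot\frac{1}{\hat x}+\big(\textrm{terms of order}<-1\big),$$
since $x\phi'(x)=\tfrac{q}{p}\phi(x)\big(1+o(1)\big)$. A term of order exactly $\hat x^{-1}$ in the output potential contributes a \emph{constant} $\tfrac{p-1}{2q}$ to $g(\hat\zeta)$, and constants are controlled only by the $\tfrac{1}{q}\ZZ$ part of the Levelt--Turrittin relation, not by the $\hat\zeta^{1/q}\CC[\![\hat\zeta^{1/q}]\!]$ part. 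So the correction is precisely residue-sensitive, the one place you cannot afford to be sloppy. (Your parallel claim about the $\mathcal O(1/x^{2})$ tail is fine: an input perturbation of order $<-1$ perturbs $h$ only at orders $<-1$, which are genuinely killed.)

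The consequences are instructive. With honest bookkeeping your output residue is $-\tfrac{p+q}{2q}-\tfrac{p-1}{2q}=-\tfrac{2p+q-1}{2q}$, which differs from the target $-\tfrac{q-1}{2q}$ by $-\tfrac{p}{q}\in\tfrac{1}{q}\ZZ$ for \emph{every} $p$ --- so your biconditional ``isomorphic precisely when $p$ is odd'' is false, and the corrected version of your route in fact proves the isomorphism without any parity assumption (consistent with the paper's closing remark that \cite{LUS} removes it). Under the stated hypothesis your conclusion survives only by luck: the neglected constant $\tfrac{p-1}{2q}$ happens to lie in $\tfrac{1}{q}\ZZ$ exactly when $p$ is odd. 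The paper avoids the entire issue by spending the parity hypothesis at the \emph{input}: $\tfrac{p-1}{2p}\in\tfrac{1}{p}\ZZ$ iff $p$ is odd, so the $\tfrac{1-p}{2p}\cdot\tfrac{1}{x}$ term is gauged away before transforming, making $-F(x)/x=(Q\circ W^{-1})(x)$ exactly and $h=W\circ Q^{-1}$ on the nose, with no perturbative inversion at all; parity then enters a second time in the residue comparison $\tfrac{p+q}{2q}\equiv\tfrac{q-1}{2q}$. Either gauge first as the paper does, or keep the term and track its residue-level contribution explicitly --- but do not discard it into the equivalence ideal, where it does not live.
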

\begin{proof}
Since $p$ is odd the term
$$\frac{1-p}{2p} \frac{1}{x}$$
of the Kac-Schwarz connection $\widehat{\ol{A}}^{Q,W}$ can be gauged away and therefore
$$\widehat{\ol{A}}^{Q,W}=\frac{\textrm{d}}{\textrm{d}x}  +\frac{F(x)}{x} \; \; \;\textrm{ with } \; \; \;
\frac{F(x)}{x}=-(Q \circ W^{-1})(x).$$
It follows that the order of $F$ is $-s/r$ where
$$s=p+q \; \; \; \textrm{ and } \; \; \; r=p.$$
In particular, $s>r$ and hence $\widehat{\ol{A}}^{Q,W}$ is an object in the category $\textrm{Hol}(\textrm{D}_{K})^{>1}$.
Moreover, since $p$ and $q$ are co-prime it follows that it is an irreducible object and we now calculate its local Fourier transform following the description in Section \ref{reformulation-section}: 

In the current situation one has 
$$h(x)=\left (-\frac{F(x)}{x}\right )^{-1}=(Q\circ W^{-1})^{-1}=W \circ Q^{-1}$$
and it follows from the calculation of $\mathcal F^{\textrm{loc}}$ in Section \ref{reformulation-section} that
\begin{eqnarray*}
\mathcal F^{\textrm{loc} } \left (\widehat{\ol{A}}^{Q,W} \right )
& \cong &
\frac{\textrm{d}}{\textrm{d}\hat x} +\frac{p+q}{2q} \frac{1}{\hat x}+(W \circ Q^{-1})(\hat x)\\
& \cong &
\frac{\textrm{d}}{\textrm{d}\hat x} +\frac{q-1}{2q} \frac{1}{\hat x}+(W \circ Q^{-1})(\hat x)\\
& \cong &
\frac{\textrm{d}}{\textrm{d}\hat x} +\frac{1-q}{2q} \frac{1}{\hat x}+(W \circ Q^{-1})(\hat x).
\end{eqnarray*}
Let us now compare this to the Kac-Schwarz connection $\ol{A}^{Q,W}$. Recall that for the corresponding Kac-Schwarz operator one has
$$A^{Q,W}=  \frac{1}{Q'(z)} \frac{\textrm{d}}{\textrm{d}z} - \frac{Q''(z)}{2Q'(z)^{2}} + W(z).$$
Furthermore, one has
$$\ol{A}^{Q,W}=\frac{\textrm{d}}{\textrm{d} \nu} -\frac{q-1}{2q} \frac{1}{\nu} +W.$$
Moreover, if $h$ is as before then 
$h(Q)=(W \circ Q^{-1} )(Q)=W$. 
It follows that 
$$  \ol{A}^{Q,W} \cong \frac{\textrm{d}}{\textrm{d} \nu}+\frac{1-q}{2q} \frac{1}{\nu} + h(\nu)\cong \mathcal F^{\textrm{loc}} \left ( \widehat{\ol{A}}^{Q,W}\right )$$ 
as desired.
\end{proof}
One can deduce the p -- q duality:
\begin{cor}
\label{main-theorem}
Let $p,q$ be positive co-prime integers. For $p$ odd one has
$$\mathcal F^{\textrm{\emph{loc}}} \Big ( \widehat{\nabla}^{q,p} \Big ) \cong  \nabla^{q,p}.$$
\end{cor}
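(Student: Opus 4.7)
The plan is to reduce the corollary to Theorem \ref{Kac-Schwarz-theorem} by identifying the Liu--Schwarz D-modules $\nabla^{q,p}$ and $\widehat{\nabla}^{q,p}$ from Section \ref{D-modules-section} with the Kac--Schwarz connections $\ol{A}^{Q,W}$ and $\widehat{\ol{A}}^{Q,W}$ for the monomial specialization $W = z^p$, $Q = z^q$. Once these isomorphisms in $\textrm{Hol}(\textrm{D}_K)^{>1}$ are in place, the corollary follows by applying Theorem \ref{Kac-Schwarz-theorem} to this monomial pair: since $\deg W = p$, the hypothesis ``$p$ odd'' of the theorem is exactly the hypothesis of the corollary, and the resulting isomorphism $\mathcal F^{\textrm{loc}}(\widehat{\ol A}^{Q,W}) \cong \ol{A}^{Q,W}$ translates directly to the claim.

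To establish the identification $\nabla^{q,p} \cong \ol{A}^{Q,W}$, I would unpack the diagonalization discussion at the end of Section \ref{quantization-sub-section}. There, the connection $\frac{\textrm{d}}{\textrm{d}x} - M(q,p)(x)$ with $x = z^q$ is transformed, via the gauge-equivalent matrix $B$ and the Levelt--Turrittin theorem, into a direct sum of rank-one pieces whose diagonal entries are Galois conjugates of $z^p - \frac{q-1}{2q}\frac{1}{z^q}$. In the variable $x = z^q$ this matches the leading part of $\ol{A}^{Q,W}$ for $Q = z^q$, $W = z^p$, namely $\frac{\textrm{d}}{\textrm{d}x} - \frac{q-1}{2q}\frac{1}{x} + x^{p/q}$; the lower-order corrections of order $\mathcal O(1/x^2)$ are absorbed into the isomorphism class in $\textrm{Hol}(\textrm{D}_K)^{>1}$, as observed in the paragraph preceding the definition of the Kac--Schwarz connection. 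The Galois-conjugate summands all represent the same isomorphism class of $E_{f,q}$ by the invariance $f(\zeta) \sim f(\mu_q \zeta)$ recalled in Section \ref{Fourier-section}, and the coprimality of $p$ and $q$ guarantees irreducibility. The parallel identification $\widehat{\nabla}^{q,p} \cong \widehat{\ol{A}}^{Q,W}$ follows by the same argument, with the sign flip $M(q,p) \mapsto -M(q,p)$ distinguishing $\widehat{\nabla}^{q,p}$ from $\nabla^{q,p}$ corresponding precisely to the sign flip $+W \mapsto -W$ distinguishing $\widehat{A}^{W,Q}$ from $A^{W,Q}$.

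The main technical obstacle is the careful verification that the subleading terms discarded in passing from the full Levelt--Turrittin diagonal form to the normalized Kac--Schwarz connection lie in the equivalence class $\zeta^{1/q}\CC[\![\zeta^{1/q}]\!] + \frac{1}{q}\ZZ$ which defines the isomorphism class of $E_{f,q}$ in $\textrm{Hol}(\textrm{D}_K)^{>1}$, rather than merely agreeing up to the leading order. Granted this bookkeeping, the corollary is an immediate application of Theorem \ref{Kac-Schwarz-theorem}.
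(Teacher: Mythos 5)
Your proposal is correct and follows essentially the same route as the paper: the paper's own proof consists precisely of citing the Liu--Schwarz identifications $\nabla^{q,p}\cong \ol{A}^{z^{q},z^{p}}$ and $\widehat{\nabla}^{q,p}\cong \widehat{\ol{A}}^{z^{q},z^{p}}$ (as sketched in Section \ref{quantization-sub-section}, delegated to \cite{LS}) and then applying Theorem \ref{Kac-Schwarz-theorem} with $(W,Q)=(z^{p},z^{q})$, which is exactly your reduction, with your diagonalization discussion merely expanding what the paper treats as known. One small slip in your heuristic remark: since $\widehat{\nabla}^{q,p}=\frac{\textrm{d}}{\textrm{d}x}+M(p,q)(x)$ with $x=z^{p}$, the relevant sign flip is on $M(p,q)$ (the companion matrix of $(-\partial^{q},\partial^{p})$), not on $M(q,p)$, but this does not affect the argument.
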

\begin{proof}
It follows from the work of Liu-Schwarz \cite{LS}, as sketched in Section \ref{quantization-sub-section}, that
$$\nabla^{q,p}\cong \ol{A}^{z^{q},z^{p}}$$
and
$$\widehat{\nabla}^{q,p}\cong  \widehat{\ol{A}}^{z^{q},z^{p}}.$$
Hence, the desired result follows from Theorem \ref{Kac-Schwarz-theorem}.
\end{proof}
Note that since $p$ and $q$ are co-prime, always at least one of them will be odd. In this sense, the parity assumption of the previous results is not restrictive. Furthermore, one can actually obtain more general results without the parity assumption. We refer to \cite{LUS} for details.

\textbf{Acknowledgements:} 

I am deeply indebted to Albert Solomonovich Schwarz for numerous exchanges on the subject of this work and for sharing many crucial insights. In particular, it was his suggestion to attempt to use the local Fourier transform to explain some  Fourier duality in the quantization of differential operators that I had observed. Also, many thanks to A. Graham-Squire and the referee for helpful exchanges and remarks.

\end{document}